\documentclass[letterpaper, 12pt]{article}[2000/05/19]
\usepackage[english]{babel}
\usepackage{amsfonts,amsmath,amssymb,amsthm,latexsym,amscd,mathrsfs}
\usepackage{ifthen,cite}
\usepackage[bookmarksnumbered=true]{hyperref}
\hypersetup{pdfpagetransition={Split}}

\newcommand{\evenhead}{Author \ name}
\newcommand{\oddhead}{Article \ name}
\newcommand{\theArticleName}{Article \ name}

\newcommand{\FirstPageHeading}[1]{\thispagestyle{empty}%
\noindent\raisebox{0pt}[0pt][0pt]{\makebox[\textwidth]{\protect\footnotesize \sf }}\par}

\newcommand{\ArticleName}[1]{\renewcommand{\theArticleName}{#1}\vspace{-2mm}\par\noindent {\LARGE\bf  #1\par}}
\newcommand{\Author}[1]{\vspace{5mm}\par\noindent {\Large  #1\par} \par\vspace{2mm}\par}
\newcommand{\Address}[1]{\vspace{2mm}\par\noindent {\it #1} \par}
\newcommand{\Email}[1]{\ifthenelse{\equal{#1}{}}{}{\par\noindent {\rm E-mail: }{\it  #1} \par}}
\newcommand{\URLaddress}[1]{\ifthenelse{\equal{#1}{}}{}{\par\noindent {\rm URL: }{\tt  #1} \par}}
\newcommand{\EmailD}[1]{\ifthenelse{\equal{#1}{}}{}{\par\noindent {$\phantom{\dag}$~\rm E-mail: }{\it  #1} \par}}
\newcommand{\URLaddressD}[1]{\ifthenelse{\equal{#1}{}}{}{\par\noindent {$\phantom{\dag}$~\rm URL: }{\tt  #1} \par}}

\newcommand{\Abstract}[1]{\vspace{6mm}\par\noindent\hspace*{10mm}
\parbox{140mm}{\small {\bf Abstract.} #1}\par}
\newcommand{\Keywords}[1]{\vspace{3mm}\par\noindent\hspace*{10mm}
\parbox{140mm}{\small {\bf Key words:} \rm #1}\par}
\newcommand{\Classification}[1]{\vspace{3mm}\par\noindent\hspace*{10mm}
\parbox{140mm}{\small {\it 2000 Mathematics Subject Classification:} \rm #1}\vspace{3mm}\par}
\newcommand{\ShortArticleName}[1]{\renewcommand{\oddhead}{#1}}
\newcommand{\AuthorNameForHeading}[1]{\renewcommand{\evenhead}{#1}}

\setlength{\textwidth}{170.0mm} \setlength{\textheight}{227.0mm} \setlength{\oddsidemargin}{0mm} \setlength{\evensidemargin}{0mm}
\setlength{\topmargin}{-8mm} \setlength{\parindent}{5.0mm}
\long\def\@makecaption#1#2{
  \sbox\@tempboxa{\small \textbf{#1.}\ \ #2}%
  \ifdim \wd\@tempboxa >\hsize
    {\small \textbf{#1.}\ \ #2}\par \else
    \global \@minipagefalse
    \hb@xt@\hsize{\hfil\box\@tempboxa\hfil}%
  \fi \vskip\belowcaptionskip}

\def\numberwithin#1#2{\@ifundefined{c@#1}{\@nocounterr{#1}}{%
  \@ifundefined{c@#2}{\@nocnterr{#2}}{%
  \@addtoreset{#1}{#2}%
  \toks@\@xp\@xp\@xp{\csname the#1\endcsname}%
  \@xp\xdef\csname the#1\endcsname
    {\@xp\@nx\csname the#2\endcsname.\the\toks@}}}}
\def\E^#1{{\buildrel #1 \over\vee}}
\newtheorem{theorem}{Theorem}
\newtheorem{lemma}{Lemma}

{\theoremstyle{definition}

}

\begin{document}

\FirstPageHeading{V.I. Gerasimenko, Zh.A. Tsvir}

\ShortArticleName{Mean field asymptotics}

\AuthorNameForHeading{V.I. Gerasimenko, Zh.A. Tsvir}

\ArticleName{Mean Field Asymptotics of\\ Generalized Quantum Kinetic Equation}

\Author{V.I. Gerasimenko$^\ast$\footnote{E-mail: \emph{gerasym@imath.kiev.ua}}
        and Zh.A. Tsvir$^\ast$$^\ast$\footnote{E-mail: \emph{Zhanna.Tsvir@simcorp.com}}}

\Address{$^\ast$\hspace*{2mm}Institute of Mathematics of NAS of Ukraine,\\
    \hspace*{4mm}3, Tereshchenkivs'ka Str.,\\
    \hspace*{4mm}01601, Kyiv-4, Ukraine}

\Address{$^\ast$$^\ast$Taras Shevchenko National University of Kyiv,\\
    \hspace*{4mm}Department of Mechanics and Mathematics,\\
    \hspace*{4mm}2, Academician Glushkov Av.,\\
    \hspace*{4mm}03187, Kyiv, Ukraine}

\bigskip

\Abstract{We construct the mean field asymptotics of a solution of initial-value
problem of the generalized quantum kinetic equation and a sequence of explicitly
defined functionals of a solution of stated kinetic equation. As a result the
quantum Vlasov kinetic equation is rigorously derived. Moreover, in case of the
presence of correlations of particles at initial time the mean field limit of a
solution of the generalized quantum kinetic equation is constructed.}

\bigskip

\Keywords{quantum kinetic equation; nonlinear Schr\"{o}dinger equation; scaling limit;
cumulant of scattering operators; quantum correlation.}
\vspace{2pc}
\Classification{35Q40; 35Q82; 47J35; 82C10; 82C40.}

\makeatletter
\renewcommand{\@evenhead}{
\hspace*{-3pt}\raisebox{-7pt}[\headheight][0pt]{\vbox{\hbox to \textwidth {\thepage \hfil \evenhead}\vskip4pt \hrule}}}
\renewcommand{\@oddhead}{
\hspace*{-3pt}\raisebox{-7pt}[\headheight][0pt]{\vbox{\hbox to \textwidth {\oddhead \hfil \thepage}\vskip4pt\hrule}}}
\renewcommand{\@evenfoot}{}
\renewcommand{\@oddfoot}{}
\makeatother

\newpage
\vphantom{math}

\protect\tableofcontents
\vspace{0.7cm}

\section{Introduction}

During the last decade the considerable progress in the rigorous derivation in scaling limits of
quantum kinetic equations, in particular the nonlinear Schr\"{o}dinger equation \cite{AGT}--\cite{Sp80}
and the Gross-Pitaevskii equation \cite{ESchY2}--\cite{S-R} as well as the quantum Boltzmann equation
\cite{BCEP3} is observed.

In paper \cite{GT} we established that, if initial data is completely defined by a one-particle
marginal density operator, then all possible states of infinite-particle systems at arbitrary moment
of time can be described by the generalized quantum kinetic equation within the framework of a
one-particle density operator without any approximations.
The aim of this paper is to construct the mean field (self-consistent field) asymptotics of a solution
of the initial-value problem of the generalized quantum kinetic equation and to extend this result on case
of the kinetic evolution in the presence of initial correlations of quantum particles.

We shortly outline the structure of the paper and the main results.
At first in Section 2 we formulate some definitions and preliminary facts on the description of quantum
kinetic evolution. Then the main results on the mean field scaling limit of a solution of initial-value
problem of the generalized quantum kinetic equation and the marginal functionals of the state are stated.
In Section 3 we prove the main results. We established that the constructed asymptotics of a solution of
the generalized quantum kinetic equation is governed by the quantum Vlasov kinetic equation for the limit
states and the limit marginal functionals of the state are products of a solution of the derived Vlasov
kinetic equation. In Section 4 we consider some consequences and generalizations of the obtained results.
In particular, we extend the quantum kinetic equations on case of the evolution of particle states in the
presence of correlations at initial time. Finally in Section 5 we conclude with some observations and
perspectives for future research.


\section{Mean field dynamics of quantum many-particle systems}
We adduce some definitions and preliminary facts about the description of quantum dynamics within the
framework of a one-particle density operator governed by the generalized quantum kinetic equation. Then
the main results about its mean field scaling limit is formulated in case of quantum particles obeying
the Maxwell-Boltzmann statistics.

\subsection{The generalized quantum kinetic equation}
We consider a quantum system of a non-fixed (i.e. arbitrary but finite) number of identical
(spinless) particles obeying Maxwell-Boltzmann statistics in the space $\mathbb{R}^{\nu}$.
We will use units where $h={2\pi\hbar}=1$ is a Planck constant,  and $m=1$ is the mass of particles.

Let $\mathcal{H}$ be a one-particle Hilbert space, then the $n$-particle spaces $\mathcal{H}_n$,
are tensor products of $n$ Hilbert spaces $\mathcal{H}$. We adopt the usual convention that
$\mathcal{H}^{\otimes 0}=\mathbb{C}$. We denote by
$\mathcal{F}_{\mathcal{H}}={\bigoplus\limits}_{n=0}^{\infty}\mathcal{H}_{n}$ the Fock space over
the Hilbert space $\mathcal{H}$.

The Hamiltonian $H_{n}$ of $n$-particle system is a self-adjoint operator with domain
$\mathcal{D}(H_{n})\subset\mathcal{H}_{n}$:
\begin{eqnarray}\label{H}
    &&H_{n}=\sum\limits_{i=1}^{n}K(i)+\epsilon\sum\limits_{i_{1}<i_{2}=1}^{n}\Phi(i_{1},i_{2}),
\end{eqnarray}
where $K(i)$ is the operator of a kinetic energy of the $i$ particle, $\Phi(i_{1},i_{2})$ is the
operator of a two-body interaction potential and $\epsilon>0$ is a scaling parameter. The operator
$K(i)$ acts on functions $\psi_n$, that belong to the subspace
$L^{2}_{0}(\mathbb{R}^{\nu n})\subset\mathcal{D}(H_n)\subset L^{2}(\mathbb{R}^{\nu n})$ of infinitely
differentiable functions with compact supports according to the formula:
$K(i)\psi_n=-\frac{1}{2}\Delta_{q_i}\psi_n$. Correspondingly we have:
$\Phi(i_{1},i_{2})\psi_{n}=\Phi(q_{i_{1}},q_{i_{2}})\psi_{n}$, and we assume that the function
$\Phi(q_{i_{1}},q_{i_{2}})$ is symmetric with respect to permutations of its arguments,
translation-invariant and bounded function.

Let $\mathfrak{L}^{1}(\mathcal{F}_\mathcal{H})={\bigoplus\limits}_{n=0}^{\infty}\mathfrak{L}^{1}(\mathcal{H}_{n})$
be the space of sequences $f=(f_0,f_{1},\ldots,f_{n},\ldots)$ of trace class operators
$f_{n}\equiv f_{n}(1,\ldots,n)\in\mathfrak{L}^{1}(\mathcal{H}_{n})$ and $f_0 \in \mathbb{C}$,
that satisfy the symmetry condition: $f_{n}(1,\ldots,n)=f_{n}(i_{1},\ldots,i_{n})$
for arbitrary $(i_{1},\ldots,i_{n})\in(1,\ldots,n)$, equipped with the norm
\begin{eqnarray*}
    &&\|f\|_{\mathfrak{L}^{1}(\mathcal{F}_\mathcal{H})}=
          \sum\limits_{n=0}^{\infty} \|f_{n}\|_{\mathfrak{L}^{1}(\mathcal{H}_{n})}=
          \sum\limits_{n=0}^{\infty}\mathrm{Tr}_{1,\ldots,n}|f_{n}(1,\ldots,n)|,
\end{eqnarray*}
where $\mathrm{Tr}_{1,\ldots,n}$ are partial traces over $1,\ldots,n$ particles. We denote by
$\mathfrak{L}^{1}_0(\mathcal{F}_\mathcal{H})={\bigoplus\limits}_{n=0}^{\infty}\mathfrak{L}^{1}_0(\mathcal{H}_{n})$
the everywhere dense set of finite sequences of degenerate operators with infinitely differentiable kernels with
compact supports \cite{Kato}.

On the space $\mathfrak{L}^{1}(\mathcal{F}_\mathcal{H})$ we define the group
$\mathcal{G}(-t)=\oplus^{\infty}_{n=0}\mathcal{G}_{n}(-t)$ of operators of the von Neumann equations
\begin{eqnarray}\label{groupG}
    &&\mathcal{G}_{n}(-t)f_n\doteq e^{-itH_{n}}f_n\,e^{itH_{n}}.
\end{eqnarray}
On the space $\mathfrak{L}^{1}(\mathcal{F}_\mathcal{H})$ the mapping (\ref{groupG}): $t\rightarrow\mathcal{G}(-t)f$
is an isometric strongly continuous group which preserves positivity and self-adjointness of operators.
For $f_n\in\mathfrak{L}^{1}_0(\mathcal{H}_n)$ there exists a limit in the
sense of the norm convergence on space $\mathfrak{L}^{1}(\mathcal{H}_{s})$ by which the infinitesimal generator
of the group of evolution operators (\ref{groupG}) is determined as follows
\begin{eqnarray}\label{infOper}
    &&\lim\limits_{t\rightarrow 0}\frac{1}{t}\big(\mathcal{G}_{n}(-t)f_{n}-f_{n}\big)
       =-i(H_{n}f_{n}-f_{n}H_{n})\doteq-\mathcal{N}_{n}f_{n},
\end{eqnarray}
where $H_{n}$ is the Hamiltonian (\ref{H}) and the operator: $-i(H_{n}f_{n}-f_{n}H_{n})$ is defined on the
domain $\mathcal{D}(H_{n})\subset\mathcal{H}_{n}$. We denote by $(-\mathcal{N}_{\mathrm{int}}(i,j))$ the operator
\begin{eqnarray}\label{cint}
    &&(-\mathcal{N}_{\mathrm{int}}(i,j))f_n\doteq-i\big(\Phi(i,j)\,f_n-f_n\,\Phi(i,j)\big)
\end{eqnarray}
defined on the subspace $\mathfrak{L}^{1}_0(\mathcal{H}_n)$.

Let us denote $Y\equiv(1,\ldots,s)$, $X\setminus Y\equiv(s+1,\ldots,s+n)$ and $\{Y\}$ is the set
consisting of one element $Y=(1,\ldots,s)$, the mapping $\theta$ is the declusterization mapping
defined by the formula: $\theta(\{Y\},X\setminus Y)=X$.
We define the $(1+n)th$-order ($n\geq0$) cumulant of groups of operators (\ref{groupG}) as follows \cite{GerS}
\begin{eqnarray}\label{cumulant}
   &&\hskip-12mm\mathfrak{A}_{1+n}(t,\{Y\},\,X\setminus Y)=
     \sum\limits_{\mathrm{P}\,:(\{Y\},\,X\setminus Y)=
     {\bigcup\limits}_i X_i}(-1)^{|\mathrm{P}|-1}(|\mathrm{P}|-1)!
     \prod_{X_i\subset\mathrm{P}}\mathcal{G}_{|\theta(X_i)|}(-t,\theta(X_i)),
\end{eqnarray}
where ${\sum\limits}_\mathrm{P}$ is the sum over all possible partitions $\mathrm{P}$ of the set
$(\{Y\},X\setminus Y)=(\{Y\},s+1,\ldots,s+n)$ into $|\mathrm{P}|$ nonempty mutually disjoint subsets
$X_i\subset(\{Y\},X\setminus Y)$, for example,
\begin{eqnarray*}
   &&\mathfrak{A}_{1}(t,\{Y\})=\mathcal{G}_{s}(-t),\\
   &&\mathfrak{A}_{2}(t,\{Y\},s+1)=\mathcal{G}_{s+1}(-t,Y,s+1)-\mathcal{G}_{s}(-t,Y)\mathcal{G}_{1}(-t,s+1).
\end{eqnarray*}

We indicate some properties of operators (\ref{cumulant}).
If $n=0$, the generator of the first-order cumulant $\mathfrak{A}_{1}(t,\{Y\})=\mathcal{G}_{s}(-t)$
for $f_{s}\in\mathfrak{L}_{0}^{1}(\mathcal{H}_{s})\subset\mathfrak{L}^{1}(\mathcal{H}_{s})$, in the
sense of the norm convergence on space $\mathfrak{L}^{1}(\mathcal{H}_{s})$, is given by the operator
\begin{eqnarray*}
   &&\lim\limits_{t\rightarrow 0}\frac{1}{t}\big(\mathfrak{A}_{1}(t,\{Y\})-I\big)f_{s}
       =-\mathcal{N}_{s}f_{s}.
\end{eqnarray*}
In the case $n=1$ we have in the sense of the norm convergence in $\mathfrak{L}^{1}(\mathcal{H}_{s+1})$
\begin{eqnarray*}
    &&\lim\limits_{t\rightarrow 0}\frac{1}{t}\,\mathfrak{A}_{2}(t,\{Y\},s+1)f_{s+1}
        =-\epsilon\sum_{i=1}^{s}\mathcal{N}_{\mathrm{int}}(i,s+1)f_{s+1},
\end{eqnarray*}
where the operator $(-\mathcal{N}_{\mathrm{int}}(i,s+1))$ is defined by formula (\ref{cint}),
and for $n>1$ in the sense of the norm convergence in $\mathfrak{L}^{1}(\mathcal{H}_{s+n})$,
as a consequence that we consider a system of particles interacting by a two-body potential,
it holds
\begin{eqnarray*}
    &&\lim\limits_{t\rightarrow 0}\frac{1}{t}\mathfrak{A}_{1+n}(t)f_{s+n}=0.
\end{eqnarray*}

We consider the mean field (self-consistent field) asymptotic behavior of a solution of the Cauchy problem of the
generalized quantum kinetic equation \cite{GT}
\begin{eqnarray}
  \label{gke}
    &&\hskip-7mm \frac{d}{dt}F_{1}(t,1)=-\mathcal{N}_{1}(1)F_{1}(t,1)+\\
    &&\hskip-7mm +\epsilon\,\mathrm{Tr}_{2}\big(-\mathcal{N}_{\mathrm{int}}(1,2)\big)
        \sum\limits_{n=0}^{\infty}\frac{1}{n!}\,\mathrm{Tr}_{3,\ldots,n+2}
        \mathfrak{V}_{1+n}\big(t,\{1,2\},3,\ldots,n+2\big)\prod _{i=1}^{n+2} F_{1}(t,i),
        \nonumber\\ \nonumber\\
  \label{vpgke}
    &&\hskip-7mm F_1(t,1)|_{t=0}= F_1^0(1).
\end{eqnarray}
In kinetic equation (\ref{gke}) the $(n+1)th$-order generated evolution operator $\mathfrak{V}_{1+n}(t),\,n\geq0$,
is defined as follows (in case of $\{Y\}=\{1,2\}$)
\begin{eqnarray}\label{skrr}
    &&\hskip-7mm \mathfrak{V}_{1+n}(t,\{Y\},X\setminus Y)\doteq \sum_{k=0}^{n}\,(-1)^k\,
        \sum_{n_1=1}^{n}\ldots\sum_{n_k=1}^{n-n_1-\ldots-n_{k-1}}
        \frac{n!}{(n-n_1-\ldots-n_k)!}\times\\
    &&\hskip-7mm \times\widehat{\mathfrak{A}}_{1+n-n_1-\ldots-n_k}(t,\{Y\},s+1,\ldots,s+n-n_1-\ldots-n_k)\times\nonumber\\
    &&\hskip-7mm \times\prod_{j=1}^k\sum\limits_{\mbox{\scriptsize $\begin{array}{c}\mathrm{D}_{j}:Z_j=\bigcup_{l_j} X_{l_j},\\
        |\mathrm{D}_{j}|\leq s+n-n_1-\dots-n_j\end{array}$}}\frac{1}{|\mathrm{D}_{j}|!}
        \sum_{i_1\neq\ldots\neq i_{|\mathrm{D}_{j}|}=1}^{s+n-n_1-\ldots-n_j}
        \prod_{X_{l_j}\subset \mathrm{D}_{j}}\,\frac{1}{|X_{l_j}|!}\,\,
        \widehat{\mathfrak{A}}_{1+|X_{l_j}|}(t,i_{l_j},X_{l_j}),\nonumber
\end{eqnarray}
where $\sum_{\mathrm{D}_{j}:Z_j=\bigcup_{l_j} X_{l_j}}$ is the sum over all possible
dissections of the linearly ordered set $Z_j\equiv(s+n-n_1-\ldots-n_j+1,\ldots,s+n-n_1-\ldots-n_{j-1})$
on no more than $s+n-n_1-\ldots-n_j$ linearly ordered subsets, and we denote by $\widehat{\mathfrak{A}}_{1+n}(t)$
the $(1+n)$-order cumulant of the groups of scattering operators
\begin{eqnarray}\label{so}
   &&\widehat{\mathcal{G}}_{n}(t)=\mathcal{G}_{n}(-t,1,\ldots,n)\prod _{i=1}^{n}\mathcal{G}_{1}(t,i),\quad n\geq1.
\end{eqnarray}
For example,
\begin{eqnarray*}
    &&\hskip-7mm \mathfrak{V}_{1}(t,\{Y\})=\widehat{\mathfrak{A}}_{1}(t,\{Y\}),\\
    &&\hskip-7mm \mathfrak{V}_{2}(t,\{Y\},s+1)=\widehat{\mathfrak{A}}_{2}(t,\{Y\},s+1)-
       \widehat{\mathfrak{A}}_{1}(t,\{Y\})\sum_{i=1}^s\widehat{\mathfrak{A}}_{2}(t,i,s+1).
\end{eqnarray*}

The collision integral series in kinetic equation~(\ref{gke}) converges under the condition that:
$\|F_{1}(t)\|_{\mathfrak{L}^{1}(\mathcal{H})}<e^{-8}.$

The global in time solution of initial-value problem (\ref{gke})-(\ref{vpgke}) is determined by the following
expansion \cite{GT}
\begin{eqnarray}\label{ske}
    &&F_{1}(t,1)= \sum\limits_{n=0}^{\infty}\frac{1}{n!}\,\mathrm{Tr}_{2,\ldots,{1+n}}\,\,
        \mathfrak{A}_{1+n}(t,1,\ldots,n+1)\prod _{i=1}^{n+1}F_{1}^0(i),
\end{eqnarray}
where $\mathfrak{A}_{1+n}(t)$ is the $(1+n)th$-order cumulant (\ref{cumulant}) of groups of operators
(\ref{groupG}). The series (\ref{ske}) converges under the condition that:
$\|F_1^0\|_{\mathfrak{L}^{1}(\mathcal{H})}<e^{-10}(1+e^{-9})^{-1}$.

In case of initial data given in means of a one-particle density operator the evolution of all possible
states of quantum many-particle systems is described by a solution of initial-value problem of the
generalized quantum kinetic equation (\ref{gke})-(\ref{vpgke}) and a sequence of explicitly defined
functionals of a solution of this generalized kinetic equation
\begin{eqnarray}\label{f}
    &&\hskip-15mmF_{s}\big(t,Y\mid F_{1}(t)\big)\doteq\sum _{n=0}^{\infty }\frac{1}{n!}\,
       \mathrm{Tr}_{s+1,\ldots,{s+n}}\,
       \mathfrak{V}_{1+n}\big(t,\{Y\},X\setminus Y\big)\prod _{i=1}^{s+n} F_{1}(t,i),\quad s\geq2,
\end{eqnarray}
where the $(n+1)th$-order ($n\geq0$) generated evolution operator $\mathfrak{V}_{1+n}(t)$ is defined by formula (\ref{skrr}).
Marginal functional series~(\ref{f}) converges under the condition that:
$\|F_{1}(t)\|_{\mathfrak{L}^{1}(\mathcal{H})}<e^{-(3s+2)}$.

\subsection{The mean field limit theorems}
The mean field scaling limit of a solution of the initial-value problem of generalized kinetic equation
(\ref{gke}) is described by the following limit theorem.
\begin{theorem}
Let there exists the limit $f_{1}^0\in\mathfrak{L}^{1}(\mathcal{H})$ of initial data~(\ref{vpgke})
\begin{eqnarray*}
    &&\lim\limits_{\epsilon\rightarrow 0}\big\|\epsilon\,F_{1}^0-f_{1}^0\big\|_{\mathfrak{L}^{1}(\mathcal{H})}=0,
\end{eqnarray*}
then for finite time interval $t\in(-t_{0},t_{0}),$ where
$t_{0}\equiv\big(2\,\|\Phi\|_{\mathfrak{L}(\mathcal{H}_{2})}\|f_1^0\|_{\mathfrak{L}^{1}(\mathcal{H})}\big)^{-1},$
there exists the following limit of solution~(\ref{ske}) of the generalized quantum kinetic equation~(\ref{gke})
\begin{eqnarray}\label{1lim}
    &&\lim\limits_{\epsilon\rightarrow 0}\big\|\epsilon\,F_{1}(t)-
       f_{1}(t)\big\|_{\mathfrak{L}^{1}(\mathcal{H})}=0,
\end{eqnarray}
where the limit one-particle marginal operator $f_{1}(t)$ is represented in the form
\begin{eqnarray}\label{viter}
   &&\hskip-7mm f_{1}(t,1)= \sum\limits_{n=0}^{\infty}\int\limits_0^tdt_{1}\ldots\int\limits_0^{t_{n-1}}dt_{n}
      \mathrm{Tr}_{\mathrm{2,\ldots,1+n}}\mathcal{G}_{1}(-t+t_{1},1)\big(-\mathcal{N}_{\mathrm{int}}(1,2)\big)\times\\
   &&\hskip+5mm\times\prod\limits_{j_1=1}^{2}\mathcal{G}_{1}(-t_{1}+t_{2},j_1)\ldots\prod\limits_{j_{n-1}=1}^{n}
      \mathcal{G}_{1}(-t_{n-1}+t_{n},j_{n-1})\times\nonumber\\
   &&\hskip+5mm \times\sum\limits_{i_{n}=1}^{n}\big(-\mathcal{N}_{\mathrm{int}}(i_{n},1+n)\big)
      \prod\limits_{j_n=1}^{1+n}\mathcal{G}_{1}(-t_{n},j_n)\prod\limits_{i=1}^{1+n}f_{1}^0(i)\nonumber.
\end{eqnarray}
\end{theorem}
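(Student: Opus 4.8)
The plan is to read off the scaling directly from the cumulant representation~(\ref{ske}) of the solution. Multiplying by the scaling parameter and redistributing it over the $n+1$ initial factors, the $n$-th term of~(\ref{ske}) acquires the compensating weight $\epsilon^{-n}$, so that
\begin{equation*}
\epsilon\,F_{1}(t,1)=\sum_{n=0}^{\infty}\frac{1}{n!}\,\epsilon^{-n}\,
\mathrm{Tr}_{2,\ldots,1+n}\,\mathfrak{A}_{1+n}(t,1,\ldots,n+1)\prod_{i=1}^{n+1}\big(\epsilon\,F_{1}^0(i)\big).
\end{equation*}
The theorem then splits into two assertions: that each rescaled summand converges as $\epsilon\to0$ to the corresponding term of~(\ref{viter}), and that this term-by-term convergence lifts to the whole series. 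The hypothesis $\lim_{\epsilon\to0}\|\epsilon F_{1}^0-f_{1}^0\|_{\mathfrak{L}^{1}(\mathcal{H})}=0$ is exactly what is needed both to replace each factor $\epsilon F_{1}^0$ by $f_{1}^0$ in the limit and to furnish a uniform-in-$\epsilon$ trace-norm bound on the initial data.

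The core of the argument is the small-$\epsilon$ behaviour of the $(1+n)$-th order cumulant~(\ref{cumulant}). I would expand each group $\mathcal{G}_{k}(-t)$ entering~(\ref{cumulant}) around the product of free one-particle groups $\prod_{i}\mathcal{G}_{1}(-t,i)$ by the iterated Duhamel formula; since $\mathcal{N}_{k}=\mathcal{N}_{k}^{\mathrm{kin}}+\epsilon\sum_{i<j}\mathcal{N}_{\mathrm{int}}(i,j)$, every interaction vertex carries precisely one power of $\epsilon$, as already recorded by the generator computations for $\mathfrak{A}_{1}$ and $\mathfrak{A}_{2}$ in Section~2. The decisive structural fact is that the inclusion--exclusion weights $(-1)^{|\mathrm{P}|-1}(|\mathrm{P}|-1)!$ defining the cumulant annihilate every Duhamel contribution whose interaction graph fails to connect all $n+1$ arguments; since connecting $n+1$ particles by two-body bonds requires at least $n$ vertices, it follows that $\mathfrak{A}_{1+n}(t)=O(\epsilon^{n})$. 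After multiplication by $\epsilon^{-n}$ only the minimally connected, tree-like terms with exactly $n$ interactions survive, and expanding the remaining internal propagators to leading order replaces them by free one-particle groups $\mathcal{G}_{1}$. Matching the resulting time-ordered integral --- one factor $(-\mathcal{N}_{\mathrm{int}})$ inserted at each of the ordered times $t\ge t_{1}\ge\cdots\ge t_{n}\ge0$ between free evolutions --- against~(\ref{viter}) reproduces exactly the $n$-th summand, the $n=0$ term being simply the free evolution $\mathcal{G}_{1}(-t,1)f_{1}^0(1)$.

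To exchange $\lim_{\epsilon\to0}$ with the summation I would produce an $\epsilon$-independent majorant. Using the isometry of the groups on $\mathfrak{L}^{1}(\mathcal{F}_{\mathcal{H}})$, the estimate $\|(-\mathcal{N}_{\mathrm{int}}(i,j))\,g\|_{\mathfrak{L}^{1}}\le 2\|\Phi\|_{\mathfrak{L}(\mathcal{H}_{2})}\,\|g\|_{\mathfrak{L}^{1}}$, and the volume $|t|^{n}/n!$ of the time simplex, each rescaled term is bounded by $c^{n}\,(2\|\Phi\|_{\mathfrak{L}(\mathcal{H}_{2})}\|f_{1}^0\|_{\mathfrak{L}^{1}(\mathcal{H})}\,|t|)^{n}$, uniformly for small $\epsilon$ thanks to the convergence of $\epsilon F_{1}^0$ to $f_{1}^0$. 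The resulting geometric series converges precisely for $|t|<t_{0}=(2\|\Phi\|_{\mathfrak{L}(\mathcal{H}_{2})}\|f_{1}^0\|_{\mathfrak{L}^{1}(\mathcal{H})})^{-1}$, which both explains the stated time restriction and guarantees that~(\ref{viter}) is itself well-defined there; dominated convergence on the series then yields~(\ref{1lim}).

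The step I expect to be the principal difficulty is the combinatorial cancellation underlying $\mathfrak{A}_{1+n}(t)=O(\epsilon^{n})$ together with the explicit identification of its leading coefficient. One must verify not only that all powers $\epsilon^{0},\ldots,\epsilon^{n-1}$ cancel, but that the surviving $\epsilon^{n}$ contribution collapses to the single connected chain of interactions appearing in~(\ref{viter}); this requires matching the partition and dissection bookkeeping of~(\ref{cumulant}) against the time-ordered Duhamel expansion and checking that the factorial weight $1/n!$ correctly absorbs the $n!$ orderings of the interaction times. Equally delicate is the uniform control of the Duhamel remainders, so that the $O(\epsilon^{n+1})$ corrections genuinely vanish after the $\epsilon^{-n}$ rescaling; the bounds of the previous paragraph must be strong enough, in the operator-norm topology, to make this passage to the limit rigorous.
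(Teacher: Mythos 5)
Your proposal is correct and follows essentially the same route as the paper: the paper's Lemma~2 is precisely your claim that $\frac{1}{n!}\mathfrak{A}_{1+n}(t)$ equals $\epsilon^{n}$ times a time-ordered Duhamel integral (so that after rescaling the limit is the free-propagator chain of~(\ref{viter})), proved there via an exact Duhamel-type identity for cumulants together with the asymptotic factorization of the full groups (Lemma~1). Your uniform geometric majorant and dominated-convergence step likewise reproduces the paper's argument that for $|t|<t_{0}$ the series remainders can be made small uniformly in $\epsilon$, so term-by-term convergence suffices.
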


For bounded interaction potentials (\ref{H}) series (\ref{viter}) is norm convergent on the space
$\mathfrak{L}^{1}(\mathcal{H})$ under the condition
\begin{eqnarray*}
    &&t<t_{0}\equiv\big(2\,\|\Phi\|_{\mathfrak{L}(\mathcal{H}_{2})}
       \|f_1^0\|_{\mathfrak{L}^{1}(\mathcal{H})}\big)^{-1},
\end{eqnarray*}
and for initial data $f_{1}^0\in\mathfrak{L}^{1}_{0}(\mathcal{H}_{1})$ the scaling limit
operator $f_{1}(t)$ is a strong solution of Cauchy problem of the quantum Vlasov equation
\begin{eqnarray}\label{Vlasov1}
     &&\frac{\partial}{\partial t}f_{1}(t,1)=-\mathcal{N}_{1}(1)f_{1}(t,1)+
       \mathrm{Tr}_{2}\big(-\mathcal{N}_{\mathrm{int}}(1,2)\big)f_{1}(t,1)f_{1}(t,2),\\ \nonumber\\
   \label{Vlasov2}
     &&f_{1}(t)|_{t=0}=f_{1}^0.
\end{eqnarray}

Since a solution of initial-value problem (\ref{gke})-(\ref{vpgke}) of the generalized kinetic equation
converges to a solution of initial-value problem (\ref{Vlasov1})-(\ref{Vlasov2}) of the quantum Vlasov
kinetic equation as (\ref{1lim}), for marginal functionals (\ref{f}) we establish
\begin{theorem}
Under the conditions of Theorem 1 for functionals (\ref{f}) it holds
\begin{eqnarray*}
    &&\lim\limits_{\epsilon\rightarrow 0} \big\|\epsilon^{s} F_{s}\big(t,1,\ldots,s \mid F_{1}(t)\big)-
       \prod\limits_{j=1}^{s}f_{1}(t,j)\big\|_{\mathfrak{L}^{1}(\mathcal{H}_{s})}=0,
\end{eqnarray*}
where the operator $f_{1}(t)$ is defined by series (\ref{viter}).
\end{theorem}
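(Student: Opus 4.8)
The plan is to leverage Theorem 1, which already establishes the mean field limit for the one-particle marginal operator, and combine it with the structure of the functional series (\ref{f}). The key observation is that the functional $F_s(t \mid F_1(t))$ is built from the same generated evolution operators $\mathfrak{V}_{1+n}(t)$ that appear in the kinetic equation, applied to products of the one-particle operator $F_1(t)$. Since we already know $\epsilon F_1(t) \to f_1(t)$ in $\mathfrak{L}^1(\mathcal{H})$, the idea is to track how the $\epsilon$-scaling propagates through the series and to show that in the limit only the zeroth-order term (the $n=0$ term) survives with a nontrivial contribution, while all higher-order terms vanish.

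Let me think about the scaling more carefully.

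The plan is to exploit that the functional (\ref{f}) is assembled from the very one-particle operator $F_{1}(t)$ whose scaling limit is supplied by Theorem~1, and to show that after multiplication by $\epsilon^{s}$ only the $n=0$ term of the series survives. First I would redistribute the powers of $\epsilon$: since each factor carries the scaling $\epsilon F_{1}(t,i)\to f_{1}(t,i)$, I rewrite
\begin{eqnarray*}
&&\epsilon^{s}F_{s}\big(t,Y\mid F_{1}(t)\big)=\sum_{n=0}^{\infty}\frac{1}{n!}\,\mathrm{Tr}_{s+1,\ldots,s+n}\big(\epsilon^{-n}\mathfrak{V}_{1+n}(t,\{Y\},X\setminus Y)\big)\prod_{i=1}^{s+n}\big(\epsilon F_{1}(t,i)\big),
\end{eqnarray*}
so that for each fixed $n$ the tensor product of $s+n$ scaled operators converges in $\mathfrak{L}^{1}(\mathcal{H}_{s+n})$ to $\prod_{i=1}^{s+n}f_{1}(t,i)$ by (\ref{1lim}) together with continuity of the tensor product in the trace norm.

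The second step is the key scaling estimate for the generated evolution operators. Because the interaction enters the Hamiltonian (\ref{H}) with the factor $\epsilon$, every scattering operator (\ref{so}) satisfies $\widehat{\mathcal{G}}_{k}(t)=I+O(\epsilon)$ in the operator norm on $\mathfrak{L}^{1}(\mathcal{H}_{k})$, and each $(1+m)$-th order cumulant $\widehat{\mathfrak{A}}_{1+m}(t)$ of these operators, being the connected part, is of order $\epsilon^{m}$. Counting the total number of particles carried by the cumulant factors in the explicit expansion (\ref{skrr}) — the leading cumulant $\widehat{\mathfrak{A}}_{1+n-n_{1}-\ldots-n_{k}}$ contributes $\epsilon^{\,n-n_{1}-\ldots-n_{k}}$ and the dissection factors contribute $\epsilon^{\,n_{1}+\ldots+n_{k}}$ — shows that $\mathfrak{V}_{1+n}(t)=O(\epsilon^{n})$, so that $\epsilon^{-n}\mathfrak{V}_{1+n}(t)$ is bounded uniformly in $\epsilon$ by a factorially growing constant $c_{n}(t)$. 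On the interval $t<t_{0}$ the product of these bounds with $\|f_{1}^{0}\|_{\mathfrak{L}^{1}(\mathcal{H})}^{s+n}/n!$ is summable, furnishing an $\epsilon$-independent dominating series.

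The third step isolates the surviving term. For $n=0$ one has $\mathfrak{V}_{1}(t,\{Y\})=\widehat{\mathfrak{A}}_{1}(t,\{Y\})=\widehat{\mathcal{G}}_{s}(t,1,\ldots,s)\to I$ as $\epsilon\to0$, so the zeroth term tends to $\prod_{j=1}^{s}f_{1}(t,j)$. For every $n\geq1$ the construction of $\mathfrak{V}_{1+n}$ in (\ref{skrr}) is arranged precisely so that the leading $\epsilon^{n}$ contributions of the competing cumulant products cancel; this is already visible for
\begin{eqnarray*}
&&\mathfrak{V}_{2}(t,\{Y\},s+1)=\widehat{\mathfrak{A}}_{2}(t,\{Y\},s+1)-\widehat{\mathfrak{A}}_{1}(t,\{Y\})\sum_{i=1}^{s}\widehat{\mathfrak{A}}_{2}(t,i,s+1),
\end{eqnarray*}
where the first-order-in-$\epsilon$ part of $\widehat{\mathfrak{A}}_{2}(t,\{Y\},s+1)=\widehat{\mathcal{G}}_{s+1}(t)-\widehat{\mathcal{G}}_{s}(t)$, encoding the interactions of the added particle $s+1$ with each of $1,\ldots,s$, coincides with that of $\sum_{i=1}^{s}\big(\widehat{\mathcal{G}}_{2}(t,i,s+1)-I\big)$, and the two subtract to zero, leaving $\mathfrak{V}_{2}(t)=o(\epsilon)$. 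The same mechanism gives $\mathfrak{V}_{1+n}(t)=o(\epsilon^{n})$ for all $n\geq1$, whence $\epsilon^{-n}\mathfrak{V}_{1+n}(t)\to0$ in the operator norm and every term with $n\geq1$ vanishes in the limit.

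Finally, dominated convergence over the index $n$, justified by the uniform summable bound of the second step, permits interchanging $\lim_{\epsilon\to0}$ with the series; only the $n=0$ contribution remains, yielding $\lim_{\epsilon\to0}\|\epsilon^{s}F_{s}(t,1,\ldots,s\mid F_{1}(t))-\prod_{j=1}^{s}f_{1}(t,j)\|_{\mathfrak{L}^{1}(\mathcal{H}_{s})}=0$. The hard part will be the second step: establishing the $\epsilon^{n}$ bound on $\mathfrak{V}_{1+n}(t)$ uniformly in $\epsilon$ and upgrading it to $o(\epsilon^{n})$ via the leading-order cancellation, since both rest on controlling the connected cumulants of scattering operators in (\ref{skrr}) with constants that stay summable against the factorial combinatorics of the dissections throughout the interval $t<t_{0}$.
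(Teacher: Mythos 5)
Your proposal is correct in substance, but it follows a genuinely different route from the paper's own proof. You work directly on the defining series (\ref{f}): after redistributing the powers of $\epsilon$, the $n=0$ term survives because $\mathfrak{V}_{1}(t,\{Y\})=\widehat{\mathcal{G}}_{s}(t)\rightarrow I$ strongly, every $n\geq 1$ term dies because $\epsilon^{-n}\mathfrak{V}_{1+n}(t)\rightarrow 0$ strongly, and an $\epsilon$-independent dominating series controls the tail --- exactly parallel to how the paper handles the tail in the proof of Theorem 1. The two asymptotic facts you rely on are in fact established by the paper itself at the end of subsection 3.1 (as consequences of Lemma 2 and definition (\ref{skrr})), so your ingredients are legitimate; your verification of the leading-order cancellation in $\mathfrak{V}_{2}$, using $\widehat{\mathfrak{A}}_{2}(t,\{Y\},s+1)=\widehat{\mathcal{G}}_{s+1}(t)-\widehat{\mathcal{G}}_{s}(t)$ since $\widehat{\mathcal{G}}_{1}(t)=I$, matches the mechanism behind those formulas. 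The paper instead proves Theorem 2 by first rewriting $F_{s}\big(t,Y\mid F_{1}(t)\big)$ through the cluster expansion into marginal correlation functionals $G_{|Y_i|}\big(t,Y_i\mid F_{1}(t)\big)$ of (\ref{corf}), whose generated evolution operators carry the declusterized argument $\theta(\{Y\})$, so that even their $n=0$ coefficient is a scattering cumulant of order at least two (for instance $\mathfrak{V}_{1}(t,\theta(\{1,2\}))=\widehat{\mathcal{G}}_{2}(t,1,2)-I$) and hence vanishes in the limit; the theorem is thus reduced to $\lim_{\epsilon\rightarrow 0}\big\|\epsilon^{s}G_{s}\big(t,Y\mid F_{1}(t)\big)\big\|_{\mathfrak{L}^{1}(\mathcal{H}_{s})}=0$ for $s\geq2$, with the all-singletons partition supplying $\prod_{j=1}^{s}f_{1}(t,j)$ via Theorem 1. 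What each approach buys: yours is more self-contained (it never invokes the correlation-functional representation cited from \cite{GP}) and directly mirrors the proof of Theorem 1, while the paper's makes the propagation of chaos manifest --- chaos holds precisely because all scaled correlation functionals vanish --- and shifts the combinatorial cancellation into the structure of the declusterized operators rather than into your claim $\mathfrak{V}_{1+n}=o(\epsilon^{n})$. Be aware that the step you flag as hard (summability of the dominating series uniformly in $\epsilon$) is where the convergence condition for series (\ref{f}), $\|F_{1}(t)\|_{\mathfrak{L}^{1}(\mathcal{H})}<e^{-(3s+2)}$, must implicitly enter; the paper's sketch is no more explicit on this point, so this is not a gap relative to its own standard of rigor.
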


This statement means that in the mean field scaling limit a chaos property preserves in time.

In section 4 these theorems are extended on the case of kinetic evolution of quantum states in the
presence of correlations of particles at initial time.


\section{The mean field limit of a solution of the generalized quantum kinetic equation}
We construct the mean-field scaling limit of a solution of initial-value problem of
generalized kinetic equation (\ref{gke}) and marginal functionals of the state (\ref{f})
and prove stated above limit theorems. On the basis of obtained results we consider
the problem of the justification of the nonlinear Schr\"{o}dinger equation.

\subsection{Preliminaries: cumulants of asymptotically perturbed groups of operators}
For asymptotically perturbed first-order cumulant (\ref{cumulant}) the following statement
is true \cite{Kato}.
\begin{lemma}
If $f_{s}\in\mathfrak{L}^{1}(\mathcal{H}_{s})$, then for arbitrary finite
time interval for the strongly continuous group (\ref{groupG}) it holds
\begin{eqnarray*}\label{lemma1}
  &&\lim\limits_{\epsilon\rightarrow 0}\big\|\mathcal{G}_{s}(-t)f_{s}-
     \prod\limits_{j=1}^{s}\mathcal{G}_{1}(-t,j)f_{s}\big\|_{\mathfrak{L}^{1}(\mathcal{H}_{s})}=0,
\end{eqnarray*}
\end{lemma}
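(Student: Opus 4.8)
The plan is to exploit that, in the Hamiltonian~(\ref{H}), the only difference between the full $s$-particle generator $(-\mathcal{N}_{s})$ and the free product generator $-\sum_{j=1}^{s}\mathcal{N}_{1}(j)$ is the interaction term $\epsilon\sum_{i_{1}<i_{2}=1}^{s}\Phi(i_{1},i_{2})$, which is weighted by the small parameter $\epsilon$ and, crucially, is \emph{bounded}. Indeed, by the definition~(\ref{cint}) of $(-\mathcal{N}_{\mathrm{int}}(i_{1},i_{2}))$ and the boundedness of $\Phi$, this commutator extends to a bounded operator on all of $\mathfrak{L}^{1}(\mathcal{H}_{s})$, with
\begin{eqnarray*}
  &&\big\|\mathcal{N}_{\mathrm{int}}(i_{1},i_{2})\,g\big\|_{\mathfrak{L}^{1}(\mathcal{H}_{s})}\leq 2\,\|\Phi\|_{\mathfrak{L}(\mathcal{H}_{2})}\,\|g\|_{\mathfrak{L}^{1}(\mathcal{H}_{s})}.
\end{eqnarray*}
Thus $(-\mathcal{N}_{s})$ is a bounded perturbation of the free generator, and the two groups can be compared through the perturbation formalism of~\cite{Kato}.

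The key step is the Duhamel (variation-of-constants) identity. Differentiating $\mathcal{G}_{s}(-(t-\tau))\prod_{j=1}^{s}\mathcal{G}_{1}(-\tau,j)$ in $\tau$, and using that the generator of $\mathcal{G}_{s}(-t)$ is $-\mathcal{N}_{s}$ while that of the free product is $-\sum_{j}\mathcal{N}_{1}(j)$, together with $\mathcal{N}_{s}-\sum_{j}\mathcal{N}_{1}(j)=\epsilon\sum_{i_{1}<i_{2}=1}^{s}\mathcal{N}_{\mathrm{int}}(i_{1},i_{2})$, I would obtain
\begin{eqnarray*}
  &&\mathcal{G}_{s}(-t)f_{s}-\prod_{j=1}^{s}\mathcal{G}_{1}(-t,j)f_{s}
    =-\epsilon\int_{0}^{t}d\tau\,\mathcal{G}_{s}(-(t-\tau))\sum_{i_{1}<i_{2}=1}^{s}\mathcal{N}_{\mathrm{int}}(i_{1},i_{2})\prod_{j=1}^{s}\mathcal{G}_{1}(-\tau,j)f_{s}.
\end{eqnarray*}
Since both $\mathcal{G}_{s}(-(t-\tau))$ and $\prod_{j}\mathcal{G}_{1}(-\tau,j)$ are isometries on $\mathfrak{L}^{1}(\mathcal{H}_{s})$ (as recorded after~(\ref{groupG})) and the number of interacting pairs is $\frac{1}{2}s(s-1)$, the displayed bound on $\mathcal{N}_{\mathrm{int}}$ immediately yields
\begin{eqnarray*}
  &&\big\|\mathcal{G}_{s}(-t)f_{s}-\prod_{j=1}^{s}\mathcal{G}_{1}(-t,j)f_{s}\big\|_{\mathfrak{L}^{1}(\mathcal{H}_{s})}
    \leq\epsilon\,|t|\,s(s-1)\,\|\Phi\|_{\mathfrak{L}(\mathcal{H}_{2})}\,\|f_{s}\|_{\mathfrak{L}^{1}(\mathcal{H}_{s})}.
\end{eqnarray*}

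For fixed $s$ and fixed finite $t$ this right-hand side is $O(\epsilon)$, so letting $\epsilon\to0$ gives the claim. To make the argument rigorous for an arbitrary $f_{s}\in\mathfrak{L}^{1}(\mathcal{H}_{s})$, I would first carry out the Duhamel step on the dense set $\mathfrak{L}^{1}_{0}(\mathcal{H}_{s})$, where the operators act classically, and then extend by density: for $\delta>0$ choose $g_{s}\in\mathfrak{L}^{1}_{0}(\mathcal{H}_{s})$ with $\|f_{s}-g_{s}\|_{\mathfrak{L}^{1}(\mathcal{H}_{s})}<\delta$ and split by the triangle inequality; the two tail terms are each $\leq\delta$ by the $\epsilon$-uniform isometry of both groups, while the middle term tends to $0$ by the estimate above, so that $\limsup_{\epsilon\to0}\|\cdots\|\leq2\delta$ for every $\delta$. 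The only point requiring genuine care—rather than a true obstacle—is the justification of the Duhamel identity and the Bochner integrability of its integrand; this is precisely where the boundedness of $\Phi$ does the work, turning the interaction into a bounded perturbation so that the integrand is a norm-continuous $\mathfrak{L}^{1}(\mathcal{H}_{s})$-valued function and the unbounded kinetic part $K(i)$ never enters the estimates, whence the result holds uniformly on any finite time interval.
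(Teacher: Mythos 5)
Your proposal is correct and follows essentially the same route as the paper: a Duhamel (variation-of-constants) identity exploiting that the interaction is a bounded perturbation of order $\epsilon$, the resulting bound $\epsilon\,|t|\,s(s-1)\,\|\Phi\|_{\mathfrak{L}(\mathcal{H}_{2})}\|f_{s}\|_{\mathfrak{L}^{1}(\mathcal{H}_{s})}$, and extension from the dense set $\mathfrak{L}^{1}_{0}(\mathcal{H}_{s})$ by boundedness/density. The only (immaterial) difference is that you place the full group $\mathcal{G}_{s}$ outside the integral and the free product inside, whereas the paper uses the mirror-image identity with the free product outside.
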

\begin{proof}
If an interaction potential is a bounded operator, then for $f_s\in\mathfrak{L}^{1}(\mathcal{H}_s)$
an analog of the Duhamel formula for group (\ref{groupG}) holds
\begin{eqnarray}\label{iter2kum}
    &&\big(\mathcal{G}_{s}(-t,1,\ldots,s)-
         \prod\limits_{l=1}^{s}\mathcal{G}_{1}(-t,l)\big)f_s=\nonumber\\
    &&=\epsilon\int\limits_{0}^{t}d\tau\prod\limits_{l=1}^{s}\mathcal{G}_{1}(-t+\tau,l)
         \big(-\sum\limits_{i<j=1}^{s}\mathcal{N}_{\mathrm{int}}(i,j)\big)\mathcal{G}_{s}(-\tau)f_s.
\end{eqnarray}
Indeed, the Duamel equation \eqref{iter2kum} is valid for
$f_{s}\in\mathfrak{L}_0^{1}(\mathcal{H}_{s})\subset\mathfrak{L}^{1}(\mathcal{H}_{s})$.
Since the operators from both sides of this equality
are bounded and the set $\mathfrak{L}^{1}_{0}(\mathcal{H}_s)$ is everywhere dense set in the space
$\mathfrak{L}^{1}(\mathcal{H}_s)$, equality (\ref{iter2kum}) holds for arbitrary $f_s\in\mathfrak{L}^{1}(\mathcal{H}_s)$.
We note that the integral in \eqref{iter2kum} exists in strong sense and the operator
${\prod_{l=1}^{s}}\mathcal{G}_{1}(-t+\tau,l)
\big(-{\sum_{i<j=1}^{s}}\mathcal{N}_{\mathrm{int}}(i,j)\big)\mathcal{G}_{s}(-\tau)f_{s}$ is strongly continuous
over $\tau$ for every $f_{s}\in\mathfrak{L}^{1}(\mathcal{H}_{s})$, and hence it is integrable.

Therefore the validity of lemma follows from the estimate
\begin{eqnarray*}
    &&\big\|\big(\mathcal{G}_{s}(-t,1,\ldots,s)-
           \prod\limits_{l=1}^{s}\mathcal{G}_{1}(-t,l)\big)f_s\big\|_{\mathfrak{L}^{1}(\mathcal{H}_{s})}\leq\\
    &&\leq\epsilon\int\limits_{0}^{t}d\tau\big\|\prod\limits_{l=1}^{s}\mathcal{G}_{1}(-t+\tau,l)
           \big(-\sum\limits_{i<j=1}^{s}\mathcal{N}_{\mathrm{int}}(i,j)\big)
           \mathcal{G}_{s}(-\tau)f_s\big\|_{\mathfrak{L}^{1}(\mathcal{H}_{s})}\leq\nonumber\\
    &&\leq\epsilon\,t s(s-1)\|\Phi\|_{\mathfrak{L}(\mathcal{H}_{2})}\|f_s\|_{\mathfrak{L}^{1}(\mathcal{H}_{s})}.
\end{eqnarray*}
\end{proof}

In consequence of Lemma 1 for first-order
cumulant (\ref{cumulant}) of scattering operators (\ref{so}) the equality holds
\begin{eqnarray*}\label{conlemma1}
   &&\lim\limits_{\epsilon\rightarrow 0}\big\|\widehat{\mathcal{G}}_{s}(t)f_{s}-f_{s}
       \big\|_{\mathfrak{L}^{1}(\mathcal{H}_{s})}=0.
\end{eqnarray*}

Correspondingly, for cumulants (\ref{cumulant}) of asymptotically perturbed groups of operators we have:
\begin{lemma}
If $f_{s+n}\in\mathfrak{L}^{1}(\mathcal{H}_{s+n})$, then for arbitrary finite time interval
for the $(1+n)th$-order cumulant of strongly continuous groups (\ref{groupG}) it holds
\begin{eqnarray*}\label{lemma2}
    &&\hskip-9mm\lim\limits_{\epsilon\rightarrow 0}\big\|\frac{1}{\epsilon^n}\,
       \frac{1}{n!}\mathfrak{A}_{1+n}(t,\{Y\},X\setminus Y)f_{s+n}-\nonumber\\
    &&\hskip-9mm-\int\limits_0^tdt_{1}\ldots\int\limits_0^{t_{n-1}}dt_{n} \prod\limits_{j=1}^{s}\mathcal{G}_{1}(-t+t_{1},j)
        \sum\limits_{i_{1}=1}^{s}\big(-\mathcal{N}_{\mathrm{int}}(i_{1},s+1)\big)
        \prod\limits_{j_1=1}^{s+1}\mathcal{G}_{1}(-t_{1}+t_{2},j_1)\ldots\nonumber\\
    &&\hskip-9mm\ldots\prod\limits_{j_{n-1}=1}^{s+n-1}\mathcal{G}_{1}(-t_{n-1}+t_{n},j_{n-1})
        \sum\limits_{i_{n}=1}^{s+n-1}\big(-\mathcal{N}_{\mathrm{int}}(i_{n},s+n)\big)
        \prod\limits_{j_n=1}^{s+n}\mathcal{G}_{1}(-t_{n},j_n)f_{s+n}\big\|_{\mathfrak{L}^{1}(\mathcal{H}_{s+n})}=0.
\end{eqnarray*}
\end{lemma}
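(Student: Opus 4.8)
The plan is to reduce the statement to the combination of an exact Duhamel-type iteration representation of the cumulant $\mathfrak{A}_{1+n}(t)$ and the factorization Lemma~1. First I would establish that, on the dense set $\mathfrak{L}^{1}_{0}(\mathcal{H}_{s+n})$ and then by density on all of $\mathfrak{L}^{1}(\mathcal{H}_{s+n})$, the $(1+n)$th-order cumulant (\ref{cumulant}) admits the representation
\begin{eqnarray*}
    &&\hskip-9mm\frac{1}{n!}\mathfrak{A}_{1+n}(t,\{Y\},X\setminus Y)f_{s+n}=\epsilon^{n}
       \int\limits_0^tdt_{1}\ldots\int\limits_0^{t_{n-1}}dt_{n}\,
       \mathcal{G}_{s}(-t+t_{1},Y)\sum\limits_{i_{1}=1}^{s}\big(-\mathcal{N}_{\mathrm{int}}(i_{1},s+1)\big)\times\\
    &&\hskip-9mm\mathcal{G}_{s+1}(-t_{1}+t_{2})\ldots\mathcal{G}_{s+n-1}(-t_{n-1}+t_{n})
       \sum\limits_{i_{n}=1}^{s+n-1}\big(-\mathcal{N}_{\mathrm{int}}(i_{n},s+n)\big)
       \mathcal{G}_{s+n}(-t_{n})f_{s+n},
\end{eqnarray*}
that is, the very expression appearing in the statement but with each product of one-particle groups $\prod_{j}\mathcal{G}_{1}(-t_{k}+t_{k+1},j)$ over a block replaced by the corresponding full group $\mathcal{G}_{\bullet}(-t_{k}+t_{k+1})$ on that block, and with the explicit prefactor $\epsilon^{n}$. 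Once this is in hand, the factor $1/\epsilon^{n}$ in the lemma cancels $\epsilon^{n}$ exactly, and it only remains to pass to the limit inside the $n$-fold time integral, replacing every full group by its one-particle factorization.

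To derive the iteration formula I would argue by induction on $n$. Differentiating the cumulant (\ref{cumulant}) in $t$ and using $\frac{d}{dt}\mathcal{G}_{k}(-t)=-\mathcal{N}_{k}\mathcal{G}_{k}(-t)$, the single-particle (kinetic) part $\sum_{k}\mathcal{N}_{1}(k)$ is additive over all $s+n$ arguments and therefore factors out of the signed partition sum, contributing $-\sum_{k=1}^{s+n}\mathcal{N}_{1}(k)\,\mathfrak{A}_{1+n}(t)$; the alternating weights $(-1)^{|\mathrm{P}|-1}(|\mathrm{P}|-1)!$ then force the surviving two-body terms to reorganize, since a two-body term contributes to the fully connected (cumulant) part only when it connects otherwise separated clusters, into a single interaction operator $\epsilon\sum_{i}\mathcal{N}_{\mathrm{int}}(i,s+n)$ acting on a lower-order cumulant (the standard recursion for cumulants of groups, cf.\ \cite{GerS}). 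Solving the resulting inhomogeneous evolution equation by the Duhamel formula (\ref{iter2kum}) and iterating $n$ times produces the displayed representation, each of the $n$ interaction events supplying one factor $\epsilon$, so that $\mathfrak{A}_{1+n}(t)=O(\epsilon^{n})$ as required.

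With the representation established, substituting it into $\frac{1}{\epsilon^{n}}\frac{1}{n!}\mathfrak{A}_{1+n}(t)f_{s+n}$ removes the prefactor and leaves an $\epsilon$-independent domain of integration. I would then estimate the difference between the full-group integrand and the one-particle-factorized integrand by a telescoping (hybrid) decomposition: writing the product of full groups minus the product of factorized groups as a sum of $n+1$ terms in which the first groups are kept in full form, a single difference $\big(\mathcal{G}_{s+k}(-t_{k}+t_{k+1})-\prod_{j}\mathcal{G}_{1}(-t_{k}+t_{k+1},j)\big)$ appears, and all subsequent groups are already factorized. In each term the head (full groups interspersed with the operators $\sum_{i}(-\mathcal{N}_{\mathrm{int}})$) is a product of isometries and bounded operators, of norm bounded uniformly in $\epsilon$ and in the time variables by a constant depending only on $s$, $n$ and $\|\Phi\|_{\mathfrak{L}(\mathcal{H}_{2})}$ (using $\|\mathcal{N}_{\mathrm{int}}\|\leq 2\|\Phi\|_{\mathfrak{L}(\mathcal{H}_{2})}$ and isometry of the groups), while the middle difference acts on the factorized tail applied to $f_{s+n}$, which is a fixed, $\epsilon$-independent, strongly $t$-continuous trace-class element, and therefore tends to zero in $\mathfrak{L}^{1}$-norm by Lemma~1.

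Thus each telescoping term converges to zero pointwise on the time simplex and is dominated by an integrable ($\epsilon$-independent) bound, so the dominated convergence theorem allows passing the limit through the $n$-fold integral of finite volume $t^{n}/n!$, giving the claimed convergence to the factorized iterated integral. The main obstacle is the first step, namely the combinatorial derivation of the iteration formula: one must verify that, after differentiation, the kinetic generators cancel across the signed partition sum and the two-body terms collapse to the single connecting operator $\epsilon\sum_{i}\mathcal{N}_{\mathrm{int}}(i,s+n)$ times a lower-order cumulant, which is precisely what yields both the iterated-integral structure and the exact power $\epsilon^{n}$. The limit transition itself is then routine, given Lemma~1 and the boundedness of $\Phi$.
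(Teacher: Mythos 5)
Your overall architecture --- an exact Duhamel-type iteration formula for the cumulant carrying the factor $\epsilon^{n}$, followed by a telescoping decomposition, Lemma~1, and dominated convergence over the time simplex --- is the same as the paper's, and your limit-transition step is sound. The genuine gap is the step you yourself identify as the crux: the iteration formula you propose is \emph{false} as an operator identity in $\mathfrak{L}^{1}(\mathcal{H}_{s+n})$; it becomes true only under the partial trace $\mathrm{Tr}_{s+1,\ldots,s+n}$ over the adjoined particles, together with the symmetry of $f_{s+n}$, which is exactly how the paper writes its Duhamel formula for cumulants (note the $\mathrm{Tr}_{s+1,\ldots,s+n}$ on \emph{both} sides there). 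Already for $n=1$, applying the Duhamel argument of \eqref{iter2kum} with the uncoupled group $\mathcal{G}_{s}(-t)\mathcal{G}_{1}(-t,s+1)$ as reference evolution gives the exact identity
\begin{eqnarray*}
\mathfrak{A}_{2}(t,\{Y\},s+1)f_{s+1}&=&\epsilon\int_{0}^{t}d\tau\,
\mathcal{G}_{s}(-t+\tau,Y)\,\mathcal{G}_{1}(-t+\tau,s+1)
\sum_{i=1}^{s}\big(-\mathcal{N}_{\mathrm{int}}(i,s+1)\big)\,\mathcal{G}_{s+1}(-\tau)f_{s+1},
\end{eqnarray*}
and the factor $\mathcal{G}_{1}(-t+\tau,s+1)$, absent from your formula, does not commute with $\mathcal{N}_{\mathrm{int}}(i,s+1)$; it can be removed only under $\mathrm{Tr}_{s+1}$, by cyclicity of the partial trace. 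For $n\geq2$ the discrepancy is structural, not cosmetic: at order $\epsilon^{n}$ the untraced cumulant contains \emph{all} time-ordered sequences of $n$ distinct pair interactions connecting the clusters, not only those in which particle $s+k$ is attached at the $k$th step. For instance, for $s=1$, $n=2$, expanding
\begin{eqnarray*}
\mathfrak{A}_{3}(t)&=&\mathcal{G}_{3}(-t)-\mathcal{G}_{2}(-t,1,2)\mathcal{G}_{1}(-t,3)
-\mathcal{G}_{2}(-t,1,3)\mathcal{G}_{1}(-t,2)-\mathcal{G}_{1}(-t,1)\mathcal{G}_{2}(-t,2,3)
+2\prod_{i=1}^{3}\mathcal{G}_{1}(-t,i)
\end{eqnarray*}
in powers of $\epsilon$ leaves the six sequences of two \emph{distinct} interacting pairs, among them the sequence in which particles $2$ and $3$ interact first and particle $1$ is connected afterwards; this term is not of your claimed form under any rearrangement. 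Consequently your inductive recursion (``the two-body terms collapse to $\epsilon\sum_{i}\mathcal{N}_{\mathrm{int}}(i,s+n)$ acting on a lower-order cumulant'') cannot close at the operator level: differentiating $\mathfrak{A}_{1+n}(t)$ produces precisely such interaction terms among the adjoined particles.

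What rescues the formula --- and the lemma, which in the paper is only ever applied under $\mathrm{Tr}_{2,\ldots,1+n}$ in series \eqref{ske} and \eqref{f} --- are three mechanisms your argument never invokes, all tied to the trace: (i) a leftmost commutator $\mathcal{N}_{\mathrm{int}}(i,j)$ with both $i,j$ among the traced particles has vanishing partial trace, which kills the extra sequences (e.g.\ the $(2,3)$-first term above); (ii) sequences related by a permutation of the traced, symmetric variables have equal traces, and this identification is exactly what absorbs the prefactor $1/n!$; (iii) the free groups of an adjoined particle prior to its first interaction combine and drop out by cyclicity, which is why only $\prod_{j=1}^{s}\mathcal{G}_{1}(-t+t_{1},j)$ appears in the first slot of the limit expression. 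You therefore need to formulate and prove the iteration formula in its traced form, as the paper does; once that is done, your density argument, telescoping estimate, and dominated convergence apply essentially verbatim and reproduce the paper's (very terse) proof, which consists of stating that traced Duhamel formula and remarking that the limit passage is ``similar to the previous lemma.''
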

The validity of this lemma follows from an analog of the Duhamel formula for cumulants
(\ref{cumulant}) of strongly continuous groups (\ref{groupG})
\begin{eqnarray*}\label{Dcum}
    &&\hskip-8mm\frac{1}{n!}\mathrm{Tr}_{s+1,\ldots,s+n}
     \mathfrak{A}_{1+n}(t,\{Y\},X\setminus Y)f_{s+n}=\\
  &&\hskip-8mm =\epsilon^n \int\limits_0^tdt_{1}\ldots\int\limits_0^{t_{n-1}}dt_{n}
     \mathrm{Tr}_{s+1,\ldots,s+n}\mathcal{G}_{s}(-t+t_{1})
     \sum\limits_{i_{1}=1}^{s}\big(-\mathcal{N}_{\mathrm{int}}(i_{1},s+1)\big)\times\nonumber\\
  &&\hskip-8mm \times\mathcal{G}_{s+1}(-t_{1}+t_{2})\ldots\mathcal{G}_{s+n-1}(-t_{n-1}+t_{n})
     \sum\limits_{i_{n}=1}^{s+n-1}\big(-\mathcal{N}_{\mathrm{int}}(i_{n},s+n)\big)\mathcal{G}_{s+n}(-t_{n})f_{s+n}.\nonumber
\end{eqnarray*}
and it is proving similar to previous lemma.

We give instances of analogs of the Duhamel equation for cumulants (\ref{cumulant}) of scattering operators (\ref{so})
and generated evolution operators  (\ref{skrr}).
If an interaction potential is a bounded operator and $f_{s+1}\in\mathfrak{L}^{1}(\mathcal{H}_{s+1})$,
then for the second-order cumulant $\widehat{\mathfrak{A}}_{2}(t,\{Y\},s+1)$ of scattering operators
(\ref{so}) an analog of the Duhamel equation holds
\begin{eqnarray*}\label{dcum}
    &&\hskip-8mm\widehat{\mathfrak{A}}_{2}(t,\{Y\},s+1)f_{s+1}=\int_0^t d\tau\,\mathcal{G}_{s}(-\tau,Y)
       \mathcal{G}_{1}(-\tau,s+1)\sum\limits_{i_1=1}^{s}\big(-\mathcal{N}_{\mathrm{int}}(i_1,s+1)\big)\times\\
    &&\hskip+17mm\times\widehat{\mathcal{G}}_{s+1}(\tau-t,Y,s+1)\prod_{i_2=1}^{s+1}\mathcal{G}_{1}(\tau,i_2)f_{s+1},\nonumber
\end{eqnarray*}
and, consequently, for the second-order generated evolution operator $\mathfrak{V}_{2}(t,\{Y\},s+1)$ we have:
\begin{eqnarray*}\label{Df}
    &&\hskip-8mm\mathfrak{V}_{2}(t,\{Y\},s+1)f_{s+1}\doteq\big(\widehat{\mathfrak{A}}_{2}(t,\{Y\},s+1)-
       \widehat{\mathfrak{A}}_{1}(t,\{Y\})\sum_{i_1=1}^s \widehat{\mathfrak{A}}_{2}(t,i_1,s+1)\big)f_{s+1}=\\
    &&\hskip-8mm=\int_0^t d\tau\,\mathcal{G}_{s}(-\tau,Y)\mathcal{G}_{1}(-\tau,s+1)\big(\sum\limits_{i_1=1}^{s}
       (-\mathcal{N}_{\mathrm{int}}(i_1,s+1))\widehat{\mathcal{G}}_{s+1}(\tau-t,Y,s+1)-\nonumber\\
    &&\hskip-8mm-\widehat{\mathcal{G}}_{s}(\tau-t,Y)\sum\limits_{i_1=1}^{s}(-\mathcal{N}_{\mathrm{int}}(i_1,s+1))
       \widehat{\mathcal{G}}_{2}(\tau-t,i_1,s+1)\big)\prod_{i_2=1}^{s+1}\mathcal{G}_{1}(\tau,i_2)f_{s+1}.\nonumber
\end{eqnarray*}

Then, according to Lemma 2, i.e. formulas of an asymptotic perturbation of cumulants of groups of operators, and
definition (\ref{skrr}) of the generated evolution operators, we establish
\begin{eqnarray*}
    &&\lim\limits_{\epsilon\rightarrow 0}\big\|\big(\mathfrak{V}_{1}(t,\{Y\})-I\big)f_{s}
         \big\|_{\mathfrak{L}^{1}(\mathcal{H}_{s})}=0,
\end{eqnarray*}
and in the general case the following equalities hold
\begin{eqnarray*}
    &&\lim\limits_{\epsilon\rightarrow 0}\big\|\frac{1}{\epsilon^{n}}\,\mathfrak{V}_{1+n}(t,\{Y\},X\setminus Y)f_{s+n}
        \big\|_{\mathfrak{L}^{1}(\mathcal{H}_{s+n})}=0,\quad n\geq1.
\end{eqnarray*}

\subsection{The proof of the limit theorem}
We give a sketch of the prove of Theorem 1.

In view that the series for $\epsilon\,F_1(t)$ converges, in the sense of
the norm convergence on the space $\mathfrak{L}^{1}(\mathcal{H})$, under the condition that:
\begin{eqnarray*}
    &&t<t_{0}\equiv\big(2\,\|\Phi\|_{\mathfrak{L}(\mathcal{H}_{2})}\|
          \epsilon\,F_{1}^0\|_{\mathfrak{L}^{1}(\mathcal{H})}\big)^{-1},
\end{eqnarray*}
then for $t<t_0$ the remainders of solution series (\ref{ske}) and (\ref{viter}) can be made
arbitrary small for sufficient large $n=n_0$ independently of $\epsilon$. Then, according to Lemma 1, Lemma 2
and definition~(\ref{skrr}), for each integer $n$ every term of these series converge term by term.

Let us construct an evolution equation, which satisfies expression (\ref{viter}). We prove that
it is a solution of initial-value problem (\ref{Vlasov1})-(\ref{Vlasov2}) of the quantum
Vlasov kinetic equation.

Taking into account the validity of equality (\ref{infOper}), we differentiate expression
(\ref{viter}) over the time variable in the sense of pointwise convergence of the space $\mathfrak{L}^{1}(\mathcal{H})$
\begin{eqnarray}\label{dife}
   &&\frac{d}{dt}f_{1}(t,1)=-\mathcal{N}(1)f_{1}(t,1)+\\
   &&+\mathrm{Tr}_{2}\big(-\mathcal{N}_{\mathrm{int}}(1,2)\big)
        \sum\limits_{n=0}^{\infty}\int\limits_0^tdt_{1}\ldots\int\limits_0^{t_{n-1}}dt_{n}\,
        \mathrm{Tr}_{3,\ldots,n+2}\prod\limits_{i_1=1}^{2}\mathcal{G}_{1}(-t+t_{1},i_1)\times\nonumber\\
   &&\times\sum\limits_{k_{1}=1}^{2}\big(-\mathcal{N}_{\mathrm{int}}(k_{1},3)\big)
        \prod\limits_{j_1=1}^{3}\mathcal{G}_{1}(-t_{1}+t_{2},j_1)
        \ldots\prod\limits_{i_{n}=1}^{n+1}\mathcal{G}_{1}(-t_{n}+t_{n},i_{n})\times\nonumber\\
   &&\times\sum\limits_{k_{n}=1}^{n+1}\big(-\mathcal{N}_{\mathrm{int}}(k_{n},n+2)\big)
        \prod\limits_{j_n=1}^{n+2}\mathcal{G}_{1}(-t_{n},j_n)\prod\limits_{i=1}^{n+2}f_1^0(i).\nonumber
\end{eqnarray}
Using the product formula for the one-particle marginal density operator $f_{1}(t,i)$ defined
by series (\ref{viter})
\begin{eqnarray*}\label{prodi}
    &&\prod\limits_{i=1}^{k}f_{1}(t,i)=\sum\limits_{n=0}^{\infty}\int\limits_0^t
        dt_{1}\ldots\int\limits_0^{t_{n-1}}dt_{n}\mathrm{Tr}_{k+1,\ldots,k+n}
        \prod\limits_{i_1=1}^{k}\mathcal{G}_{1}(-t+t_{1},i_1)\times\\
    &&\times\sum\limits_{k_{1}=1}^{k}\big(-\mathcal{N}_{\mathrm{int}}(k_{1},k+1)\big)
        \prod\limits_{j_1=1}^{k+1}\mathcal{G}_{1}(-t_{1}+t_{2},j_1)\ldots
        \prod\limits_{i_n=1}^{k+n-1}\mathcal{G}_{1}(-t_{n-1}+t_{n},i_n)\times\nonumber\\
    &&\times\sum\limits_{k_{n}=1}^{k+n-1}\big(-\mathcal{N}_{\mathrm{int}}(k_{n},k+n)\big)
        \prod\limits_{j_n=1}^{k+n}\mathcal{G}_{1}(-t_{n},j_n)\prod \limits_{i=1}^{k+n}f_{1}^0(i),\nonumber
\end{eqnarray*}
where the group property of one-parameter mapping (\ref{groupG}) is applied, we express the second summand
in the right-hand side of equality (\ref{dife}) in terms of operators ${\prod\limits}_{i=1}^{2}f_{1}(t,i)$,
and consequently, we derive kinetic equation (\ref{Vlasov1}).

\subsection{The limit marginal functionals of the state: the propagation of a chaos}
To give a sketch of the prove of Theorem 2 we represent marginal functionals of the state~(\ref{f})
in terms of the marginal correlation functionals $G_{s}\big(t,Y\mid F_{1}(t)\big),\,s\geq2$, namely
\begin{eqnarray*}\label{FG}
   &&F_{s}\big(t,Y\mid F_{1}(t)\big)=
      \sum\limits_{\mbox{\scriptsize$\begin{array}{c}\mathrm{P}:Y=\bigcup_{i}Y_{i}\end{array}$}}
      \prod_{Y_i\subset \mathrm{P}}G_{|Y_i|}\big(t,Y_i\mid F_{1}(t)\big),\quad s\geq2,
\end{eqnarray*}
where ${\sum\limits}_{\mathrm{P}:Y=\bigcup_{i} Y_{i}}$ is the sum over all possible partitions $\mathrm{P}$
of the set $Y\equiv(1,\ldots,s)$ into $|\mathrm{P}|$ nonempty mutually disjoint subsets $Y_i\subset Y$.
The marginal correlation functionals $G_{s}\big(t,Y\mid F_{1}(t)\big),\,s\geq2$, are represented by the
following expansions \cite{GP}
\begin{eqnarray}\label{corf}
    &&\hskip-9mm G_{s}\big(t,Y\mid F_{1}(t)\big)=\sum\limits_{n=0}^{\infty}\frac{1}{n!}\,\mathrm{Tr}_{s+1,\ldots,s+n}
        \mathfrak{V}_{1+n}\big(t,\theta(\{Y\}),X\setminus Y\big)\prod_{i=1}^{s+n}F_{1}(t,i),\quad s\geq2.
\end{eqnarray}
In series (\ref{corf}) it is introduced the notion of the declusterization mapping $\theta: \{Y\}\rightarrow Y$
defined above. Hence in contrast to expansion (\ref{f}) the $n$ term of expansions
(\ref{corf}) of the marginal correlation functional $G_{s}\big(t,Y\mid F_{1}(t)\big)$ is governed by the
$(1+n)th$-order generated evolution operator (\ref{skrr}) of the $(s+n)th$-order cumulants of the scattering
operators. For example, the lower orders generated evolution operators
$\mathfrak{V}_{1+n}\big(t,\theta(\{Y\}),X\setminus Y\big),\,n\geq0$, have the form
\begin{eqnarray*}\label{rrrlsc}
   &&\mathfrak{V}_{1}(t,\theta(\{Y\}))=\widehat{\mathfrak{A}}_{s}(t,\theta(\{Y\}),\\
   &&\mathfrak{V}_{2}(t,\theta(\{Y\}),s+1)=\widehat{\mathfrak{A}}_{s+1}(t,\theta(\{Y\}),s+1)-
       \widehat{\mathfrak{A}}_{s}(t,\theta(\{Y\}))\sum_{i=1}^s \widehat{\mathfrak{A}}_{2}(t,i,s+1),\nonumber
\end{eqnarray*}
and in case of $s=2$, we have
\begin{eqnarray*}
   &&\mathfrak{V}_{1}(t,\theta(\{1,2\}))=\widehat{\mathcal{G}}_{2}(t,1,2)-I.
\end{eqnarray*}

The statement of Theorem 2 is true in consequence of the validity for functionals (\ref{corf})
the equalities
\begin{eqnarray*}
    &&\lim\limits_{\epsilon\rightarrow 0} \big\|\epsilon^{s} G_{s}\big(t,Y \mid F_{1}(t)\big)
      \big\|_{\mathfrak{L}^{1}(\mathcal{H}_{s})}=0, \quad s\geq2.
\end{eqnarray*}
In view of the structure of the generated evolution operators
$\mathfrak{V}_{1+n}\big(t,\theta(\{Y\}),X\setminus Y\big),\,n\geq0,$ of functionals of the state
(\ref{corf}) the last equality is true according to Lemma 1 and Lemma 2.

\subsection{Mean field quantum kinetic equations}
If we consider pure states, i.e. $f_{1}(t)=|\psi_{t}\rangle\langle\psi_{t}|$ is a one-dimensional
projector onto a unit vector $|\psi_{t}\rangle\in\mathcal{H}$ or in terms a kernel of the marginal
operator $f_{1}(t)$: $f_{1}(t,q,q')=\psi(t,q)\psi^{\ast}(t,q')$, then the quantum Vlasov kinetic
equation reduces to the Hartree equation
\begin{eqnarray}\label{Hartree}
    &&i\frac{\partial}{\partial t}\psi(t,q)=-\frac{1}{2}\Delta_{q}\psi(t,q)+
        \int dq'\Phi(q-q')|\psi(t,q')|^{2}\psi(t,q).
\end{eqnarray}

Moreover, for pure states, if it holds
\begin{eqnarray*}
    &&\lim\limits_{\epsilon\rightarrow 0}\big\|\epsilon\,F_{1}^0-
        |\psi_{0}\rangle\langle\psi_{0}|\big\|_{\mathfrak{L}^{1}(\mathcal{H})}=0,
\end{eqnarray*}
the statement of Theorem 2 reads
\begin{eqnarray*}
    &&\lim\limits_{\epsilon\rightarrow 0}\big\|\,\epsilon^{s} F_{s}\big(t \mid F_{1}(t)\big)-
        |\psi_{t}\rangle\langle\psi_{t}|^{\otimes s}\,\big\|_{\mathfrak{L}^{1}(\mathcal{H}_{s})}=0,
\end{eqnarray*}
where $|\psi_{t}\rangle$ is the solution of the nonlinear Hartree equation (\ref{Hartree}) for
initial data $|\psi_{0}\rangle$.

We remark that in case of a system of particles, interacting by the potential which kernel
is the Dirac measure $\Phi(q)=\delta(q)$, the Hartree equation (\ref{Hartree}) is reduced
to the cubic nonlinear Schr\"{o}dinger equation
\begin{eqnarray*}
    &&i\frac{\partial}{\partial t}\psi(t,q)=
       -\frac{1}{2}\Delta_{q}\psi(t,q)+|\psi(t,q)|^{2}\psi(t,q).
\end{eqnarray*}

The obtained results can be generalized on systems of quantum particles
interacting via many-body potentials, i.e. systems with the Hamilton operators
\begin{eqnarray*}\label{H_n}
   &&H_{n}=\sum\limits_{i=1}^{n}K(i)+\sum\limits_{k=2}^{n}\epsilon^{k-1}
     \sum\limits_{i_{1}<\ldots<i_{k}=1}^{n}\Phi^{(k)}(i_{1},\ldots,i_{k}),
\end{eqnarray*}
where the operator $\Phi^{(k)}$ is an operator of a $k$-body interaction potential.

In this case the generalized quantum kinetic equation (\ref{gke})
has the form \cite{GT}
\begin{eqnarray}\label{gkeN}
  &&\hskip-9mm\frac{d}{dt}F_{1}(t,1)=-\mathcal{N}_{1}(1)F_{1}(t,1)+
     \sum\limits_{n=1}^{\infty}\sum_{k=1}^{n}\epsilon^{k}\frac{1}{(n-k)!}\frac{1}{k!}
     \,\mathrm{Tr}_{2,\ldots,n+1}(-\mathcal{N}_{\mathrm{int}}^{(k+1)})(1,\ldots,\\
  &&\hskip+8mm k+1)\mathfrak{V}_{1+n-k}(t,\{1,\ldots,k+1\},k+2,\ldots,n+1)
     \prod _{i=1}^{n+1}F_{1}(t,i),\nonumber
\end{eqnarray}
where the operator $\mathfrak{V}_{1+n-k}(t)$ is the $(1+n-k)th$-order generated
evolution operator (\ref{skrr}) and
\begin{eqnarray*}
    &&(-\mathcal{N}_{\mathrm{int}}^{(k)})(1,\ldots,k)f_k\doteq-i\big(\Phi^{(k)}\,f_k-f_k\,\Phi^{(k)}\big).
\end{eqnarray*}
In case of a $k$-body interaction potential the collision integral of the generalized quantum
kinetic equation (\ref{gkeN}) is given by the norm convergent series under the condition that:
$\|F_{1}(t)\|_{\mathfrak{L}^{1}(\mathcal{H})}<e^{-8}$.

The mean field scaling limit $f_{1}(t)$ of solution~(\ref{ske}) of initial-value problem of
the generalized kinetic equation~(\ref{gkeN}) is described by the limit theorem similar
to Theorem 2 and for initial data $f_{1}^0\in\mathfrak{L}^{1}_{0}(\mathcal{H})$
it is a strong solution of the Cauchy problem of the following Vlasov quantum kinetic equation
\begin{eqnarray*}
 \label{Vlasov1n}
   &&\frac{d}{dt}f_{1}(t,1)=-\mathcal{N}(1)f_{1}(t,1)+\sum\limits_{n=1}^{\infty}\frac{1}{n!}
      \,\mathrm{Tr}_{2,\ldots,n+1}(-\mathcal{N}_{\mathrm{int}}^{(n+1)})(1,\ldots,n+1)
      \prod _{i=1}^{n+1}f_{1}(t,i),\nonumber\\ \nonumber\\
 \label{Vlasovin}
   &&f_1(t)|_{t=0}= f_1^0.
\end{eqnarray*}
Then for a many-body interaction potential the Hartree equation takes the form
\begin{eqnarray*}\label{Heq}
    &&i\frac{\partial}{\partial t} \psi(t,q_1)=
        -\frac{1}{2}\Delta_{q_1}\psi(t,q_1)+ \\
    &&+\sum\limits_{n=1}^{\infty}\frac{1}{n!}\int d q_2\ldots d q_{n+1}
        \Phi^{(n+1)}(q_1,\ldots,q_{n+1})\prod\limits_{i=2}^{n+1}|\psi(t,q_i)|^{2}\psi(t,q_1),
\end{eqnarray*}
and correspondingly, we can derive the nonlinear Schr\"{o}dinger equation with the $2n-1$ power
nonlinear term.


\section{The kinetic evolution involving initial correlations}
One of the advantages of the developed approach is the possibility to construct the kinetic
equations in scaling limits in the presence of correlations of particle states at initial time.

We extend obtained results on case of quantum systems of particles which initial data
specified by initial correlations, for instance, correlations characterizing the condensate states
of particles.

\subsection{Quantum kinetic equations in the presence of initial correlations}
We will consider initial state which is given by the following sequence of marginal density operators:
\begin{eqnarray*}\label{id}
     &&F(0)=\big(1,F_1^0(1),g_{2}(1,2)
        \prod_{i=1}^{2}F_1^0(i),\ldots,g_{n}(1,\ldots,n)\prod_{i=1}^{n}F_1^0(i),\ldots\big),
\end{eqnarray*}
where the bounded operators $g_{n}\in\mathfrak{L}(\mathcal{H}_n),\,n\geq2$, are specified initial
correlations \cite{GP}. Such initial data is typical for the condensed states of quantum gases, for
example, the equilibrium state of the Bose condensate satisfies the weakening of correlation
condition with the correlations which characterize the condensed state \cite{BogLect}.

In this case the one-particle density operator $F_{1}(t)$ is governed by the following generalized
quantum kinetic equation \cite{GT11}
\begin{eqnarray}\label{gkec}
   &&\hskip-8mm\frac{d}{dt}F_{1}(t,1)=-\mathcal{N}(1)F_{1}(t,1)+\\
   &&\hskip+9mm +\epsilon\,\mathrm{Tr}_{2}(-\mathcal{N}_{\mathrm{int}}(1,2))
      \sum\limits_{n=0}^{\infty}\frac{1}{n!}\,\mathrm{Tr}_{3,\ldots,n+2}\,
      \mathfrak{G}_{1+n}(t,\{1,2\},3,\ldots,n+2)\prod_{i=1}^{n+2}F_{1}(t,i),\nonumber
\\ \nonumber\\
  \label{gkece}
    &&\hskip-7mm F_1(t,1)|_{t=0}= F_1^0(1),
\end{eqnarray}
where the $(1+n)th$-order generated evolution operator $\mathfrak{G}_{1+n}(t),\,n\geq0$, is defined
by the following expansion:
\begin{eqnarray}\label{skrrc}
   &&\hskip-7mm\mathfrak{G}_{1+n}(t,\{Y\},X\setminus Y)\doteq n!\,\sum_{k=0}^{n}\,(-1)^k\,\sum_{n_1=1}^{n}\ldots
       \sum_{n_k=1}^{n-n_1-\ldots-n_{k-1}}\frac{1}{(n-n_1-\ldots-n_k)!}\times\\
   &&\hskip-7mm\times\breve{\mathfrak{A}}_{1+n-n_1-\ldots-n_k}(t,\{Y\},s+1,\ldots,
       s+n-n_1-\ldots-n_k)\times\nonumber\\
   &&\hskip-7mm\times\prod_{j=1}^k\,\sum\limits_{\mbox{\scriptsize$\begin{array}{c}
       \mathrm{D}_{j}:Z_j=\bigcup_{l_j}X_{l_j},\\
       |\mathrm{D}_{j}|\leq s+n-n_1-\dots-n_j\end{array}$}}\frac{1}{|\mathrm{D}_{j}|!}
       \sum_{i_1\neq\ldots\neq i_{|\mathrm{D}_{j}|}=1}^{s+n-n_1-\ldots-n_j}\,\,
       \prod_{X_{l_j}\subset \mathrm{D}_{j}}\,\frac{1}{|X_{l_j}|!}\,\,
       \breve{\mathfrak{A}}_{1+|X_{l_j}|}(t,i_{l_j},X_{l_j}).\nonumber
\end{eqnarray}
In formula \eqref{skrrc} we denote by $\sum_{\mathrm{D}_{j}:Z_j=\bigcup_{l_j} X_{l_j}}$ the sum over all possible
dissections of the linearly ordered set $Z_j\equiv(s+n-n_1-\ldots-n_j+1,\ldots,s+n-n_1-\ldots-n_{j-1})$ on
no more than $s+n-n_1-\ldots-n_j$ linearly ordered subsets and we introduce the $(1+n)th$-order scattering cumulants
\begin{eqnarray*}
   &&\breve{\mathfrak{A}}_{1+n}(t,\{Y\},X\setminus Y)\doteq
       \mathfrak{A}_{1+n}(-t,\{Y\},X\setminus Y)g_{1+n}(\{Y\},X\setminus Y)
       \prod_{i=1}^{s+n}\mathfrak{A}_{1}(t,i).
\end{eqnarray*}
For example,
\begin{eqnarray*}
   &&\hskip-8mm\mathfrak{G}_{1}(t,\{Y\})=\breve{\mathfrak{A}}_{1}(t,\{Y\})\doteq\\
   &&\hskip+9mm\doteq\mathfrak{A}_{1}(-t,\{Y\})g_{1}(\{Y\})\prod_{i=1}^{s}\mathfrak{A}_{1}(t,i),
\end{eqnarray*}
and
\begin{eqnarray*}
   &&\hskip-7mm\mathfrak{G}_{2}(t,\{Y\},s+1)=\\
   &&\hskip-7mm =\mathfrak{A}_{2}(-t,\{Y\},s+1)g_{2}(\{Y\},s+1)\prod_{i=1}^{s+1}\mathfrak{A}_{1}(t,i)-\\
   &&-\mathfrak{A}_{1}(-t,\{Y\})g_{1}(\{Y\})\prod_{i=1}^{s}\mathfrak{A}_{1}(t,i)
       \sum_{i=1}^s\mathfrak{A}_{2}(-t,i,s+1)g_{2}(i,s+1)\mathfrak{A}_{1}(t,i)\mathfrak{A}_{1}(t,s+1),
\end{eqnarray*}
where it is used notations accepted above.

The global in time solution of initial-value problem (\ref{gkec})-(\ref{gkece}) is determined by the
series \cite{GT11}
\begin{eqnarray}\label{skec}
    &&\hskip-9mm F_{1}(t,1)=\sum\limits_{n=0}^{\infty}\frac{1}{n!}\,\mathrm{Tr}_{2,\ldots,{1+n}}\,\,
        \mathfrak{A}_{1+n}(t,1,\ldots,n+1)g_{1+n}(1,...,n+1)\prod _{i=1}^{n+1}F_{1}^0(i),
\end{eqnarray}
where $\mathfrak{A}_{1+n}(t)$ is the $(1+n)th$-order cumulant (\ref{cumulant}) of groups of operators
(\ref{groupG}) and the operators $g_{1+n},\, n\geq0,$ are specified initial correlations.
The series (\ref{skec}) converges under the condition that:
$\|F_1^0\|_{\mathfrak{L}^{1}(\mathcal{H})}<e^{-10}(1+e^{-9})^{-1}$.

Correspondingly, the marginal functionals of the state are represented by the following expansions:
\begin{eqnarray}\label{cf}
   &&\hskip-9mm F_{s}(t,Y\mid F_{1}(t))\doteq\sum _{n=0}^{\infty}\frac{1}{n!}\,
      \mathrm{Tr}_{s+1,\ldots,{s+n}}\,\mathfrak{G}_{1+n}(t,\{Y\},X\setminus Y)\prod_{i=1}^{s+n}F_{1}(t,i),
      \quad s\geq2,
\end{eqnarray}
where generated evolution operators of these functionals are defined by formula \eqref{skrrc}.

Thus, the coefficients of generalized quantum kinetic equation \eqref{gkec} and generated evolution
operators \eqref{skrrc} of marginal functionals of the state \eqref{cf} are determined by the operators specified
initial correlations.

\subsection{The mean field evolution of initial correlations}
In case of initial state involving correlations for generated evolution
operator (\ref{skrrc}) of asymptotically perturbed groups of operators
in the mean field limit the following equality is valid
\begin{eqnarray}\label{limc}
  &&\hskip-5mm\lim\limits_{\epsilon\rightarrow 0}\big\|\frac{1}{\epsilon^n}\mathfrak{G}_{1+n}(t,\{Y\},X\setminus Y)
     f_{s+n}\big\|_{\mathfrak{L}^{1}(\mathcal{H}_{s+n})}=0, \quad n\geq1,
\end{eqnarray}
and in case of first-order generated evolution operator (\ref{skrrc}) we have, respectively
\begin{eqnarray}\label{limc1}
  &&\hskip-5mm\lim\limits_{\epsilon\rightarrow 0}\big\|\big(\mathfrak{G}_{1}(t,\{Y\})-
     \prod_{i_1=1}^{s}\mathcal{G}_{1}(-t,i_1)g_{1}(\{Y\})\prod_{i_2=1}^{s}\mathcal{G}_{1}(t,i_2)\big)
     f_{s}\big\|_{\mathfrak{L}^{1}(\mathcal{H}_{s})}=0.
\end{eqnarray}

In view that under the condition that:
$t<t_{0}\equiv\big(2\,\|\Phi\|_{\mathfrak{L}(\mathcal{H}_{2})}\|\epsilon\,F_{1}^0\|_{\mathfrak{L}^{1}(\mathcal{H})}\big)^{-1}$,
the series for $\epsilon\,F_1(t)$ is norm convergent, then for $t<t_0$ the remainders of solution series~(\ref{skec})
can be made arbitrary small for sufficient large $n=n_0$ independently of $\epsilon$. Then, using stated above asymptotic
perturbation formulas, for each integer $n$ every term of this series converges term by term to the limit operator
$f_{1}(t)$ which is represented by the following series
\begin{eqnarray}\label{viterc}
    &&\hskip-9mmf_{1}(t,1)=\\
    &&\hskip-9mm=\sum\limits_{n=0}^{\infty}\int\limits_0^tdt_{1}\ldots\int\limits_0^{t_{n-1}}dt_{n}
        \mathrm{Tr}_{\mathrm{2,\ldots,1+n}}\,\mathcal{G}_{1}(-t+t_{1},1)
        \big(-\mathcal{N}_{\mathrm{int}}(1,2)\big)\prod\limits_{j_1=1}^{2}\mathcal{G}_{1}(-t_{1}+t_{2},j_1)\ldots\nonumber\\
    &&\hskip-9mm\ldots\prod\limits_{j_{n-1}=1}^{n}\mathcal{G}_{1}(-t_{n-1}+t_{n},j_{n-1})
        \sum\limits_{i_{n}=1}^{n}\big(-\mathcal{N}_{\mathrm{int}}(i_{n},1+n)\big)
        \prod\limits_{j_n=1}^{1+n}\mathcal{G}_{1}(-t_{n},j_n)\times\nonumber\\
    &&\hskip-9mm  \times g_{1+n}(1,\ldots,n+1)\prod\limits_{i=1}^{1+n}f_{1}^0(i)\nonumber.
\end{eqnarray}
For bounded interaction potentials series (\ref{viterc}) is norm convergent on the space
$\mathfrak{L}^{1}(\mathcal{H})$ under the condition: $t<t_{0}\equiv\big(2\,\|\Phi\|_{\mathfrak{L}(\mathcal{H}_{2})}
\|f_1^0\|_{\mathfrak{L}^{1}(\mathcal{H})}\big)^{-1}$.

Thus, if there exists the limit $f_{1}^0\in\mathfrak{L}^{1}(\mathcal{H})$ of initial data (\ref{vpgke}), namely
\begin{eqnarray*}
   &&\lim\limits_{\epsilon\rightarrow 0}\big\|\epsilon\,F_{1}^0-f_{1}^0\big\|_{\mathfrak{L}^{1}(\mathcal{H})}=0,
\end{eqnarray*}
then for finite time interval $t\in(-t_{0},t_{0}),$ where
$t_{0}\equiv\big(2\,\|\Phi\|_{\mathfrak{L}(\mathcal{H}_{2})}\|f_1^0\|_{\mathfrak{L}^{1}(\mathcal{H})}\big)^{-1},$
there exists the mean field limit of solution expansion (\ref{skec}) of the generalized quantum kinetic equation (\ref{gkec}):
\begin{eqnarray}\label{1limc}
    &&\lim\limits_{\epsilon\rightarrow 0}\big\|\epsilon\,F_{1}(t)-
       f_{1}(t)\big\|_{\mathfrak{L}^{1}(\mathcal{H})}=0,
\end{eqnarray}
where the operator $f_{1}(t)$ is represented by series (\ref{viterc}) and it is a solution of the Cauchy problem of
the modified Vlasov quantum kinetic equation
\begin{eqnarray}\label{mVe}
  &&\hskip-5mm\frac{d}{dt}f_{1}(t,1)=-\mathcal{N}(1)f_{1}(t,1)+\\
  &&\hskip-5mm+\mathrm{Tr}_{2}(-\mathcal{N}_{\mathrm{int}})(1,2)
     \prod_{i_1=1}^{2}\mathcal{G}_{1}(-t,i_1)g_{1}(\{1,2\})
     \prod_{i_2=1}^{2}\mathcal{G}_{1}(t,i_2)f_{1}(t,1)f_{1}(t,2),\nonumber\\
\label{Vlasov2c}
  &&\hskip-5mmf_{1}(t)|_{t=0}=f_{1}^0.
\end{eqnarray}

Since a solution of initial-value problem (\ref{gkec})-(\ref{gkece}) of the generalized kinetic equation
converges to a solution of initial-value problem (\ref{mVe})-(\ref{Vlasov2c}) of the modified quantum Vlasov
kinetic equation as (\ref{1limc}) and equalities (\ref{limc}) and (\ref{limc1}) hold, for marginal functionals
of the state (\ref{cf}) we establish
\begin{eqnarray*}\label{lf}
  &&\hskip-8mm\lim\limits_{\epsilon\rightarrow 0}\big\|\epsilon^{s}F_{s}(t,Y\mid F_{1}(t))-
     \prod _{i_1=1}^{s}\mathcal{G}_{1}(-t,i_1)g_{1}(\{Y\})
     \prod_{i_2=1}^{s}\mathcal{G}_{1}(t,i_2)
     \prod\limits_{j=1}^{s}f_{1}(t,j)\big\|_{\mathfrak{L}^{1}(\mathcal{H}_{s})}=0.
\end{eqnarray*}
This equality means the propagation of initial correlations in time in the mean field limit.

Let us consider the pure states, i.e. the operator $f_{1}(t)=|\psi_{t}\rangle\langle\psi_{t}|$
is a one-dimensional projector onto a unit vector $|\psi_{t}\rangle\in\mathcal{H}$ and its kernel has
the following form: $f_{1}(t,q,q')=\psi(t,q)\psi^{\ast}(t,q')$. Then, we remark that in case of 
a system of particles, interacting by the potential which kernel $\Phi(q)=\delta(q)$ is the Dirac measure,
the modified quantum Vlasov kinetic equation (\ref{mVe}) reduces to the Gross-Pitaevskii-type kinetic
equation
\begin{eqnarray}\label{gpeqc}
  &&\hskip-12mm i\frac{\partial}{\partial t}\psi(t,q)=-\frac{1}{2}\Delta_{q}\psi(t,q)+
     \int d q'd q''\mathfrak{b}(t,q,q;q',q'')\psi(t,q'')\psi^{\ast}(t,q)\psi(t,q),
\end{eqnarray}
where the coupling ratio $\mathfrak{b}(t,q,q;q',q'')$ of the collision integral
is the kernel of the scattering length operator
$\prod_{i_1=1}^{2}\mathcal{G}_{1}(-t,i_1)b_{1}(\{1,2\})\prod_{i_2=1}^{2}\mathcal{G}_{1}(t,i_2)$.

Observing that on the macroscopic scale of the variation of variables, groups of operators
(\ref{groupG}) of finitely many particles depend on microscopic time variable $\varepsilon^{-1}t$,
where $\varepsilon\geq0$ is a scale parameter, the dimensionless marginal functionals of the state
are represented in the form: $F_{s}(\varepsilon^{-1}t\mid F_{1}(t))$. As a result of the formal
limit processing $\varepsilon\rightarrow0$ in collision integral (\ref{gpeqc}), we establish
the Markovian kinetic evolution with the corresponding coefficient $\mathfrak{b}(\varepsilon^{-1}t)$.

\section{Conclusion}
In the mean field scaling limit we derived the quantum Vlasov kinetic equation and correspondingly,
the Hartree equation (or the nonlinear Schr\"{o}dinger equation) for pure states of quantum
systems of particles obeying the Maxwell-Boltzmann statistics. In particular,
in case of two-body interaction potentials it is the evolution equation with the cubic nonlinear term
and in case of $n$-body interaction potentials the Hartree equation contains the $2n-1$ power nonlinear term.
The obtained results can be extended to quantum systems of bosons and fermions.

The mean field scaling asymptotics of a solution of the generalized quantum kinetic equation
of many-particle systems in condensed states has been also constructed.
We note that one more approach to the construction of the kinetic equations in the mean field limit,
in the case of the presence of correlations at initial time, can be developed on basis of the description
of the kinetic evolution in terms of marginal observables \cite{G11}.


\addcontentsline{toc}{section}{References}
\renewcommand{\refname}{References}

\end{document}